\newcommand{\lv}[1]{#1}
\newcommand{\sv}[1]{}
\let\doendproof\endproof
\renewcommand\endproof{~\hfill\qed\doendproof}
\DeclareMathOperator{\diam}{diam}
\DeclareMathOperator{\rc}{rc}
\DeclareMathOperator{\src}{src}
\DeclareMathOperator{\rvc}{rvc}
\DeclareMathOperator{\srvc}{srvc}
\DeclareMathOperator{\vcn}{vcn}
\DeclareMathOperator{\tw}{tw}
\DeclareMathOperator{\inc}{I}
\DeclareMathOperator{\path}{Path}
\DeclareMathOperator{\vertexconnects}{VertexConnects}
\DeclareMathOperator{\edgeconnects}{EdgeConnects}
\DeclareMathOperator{\rainbow}{Rainbow}
\newlang{\subsetsrvc}{\emph{k}\text{-}SSRVC}
\newlang{\rcprob}{RC}
\newlang{\rvcprob}{RVC}
\newlang{\srcprob}{SRC}
\newlang{\srvcprob}{SRVC}
\newlang{\savingrc}{SavingRC}
\newlang{\savingrvc}{SavingRVC}
\newcommand{\bigoh}{\mathcal{O}}
\newcommand{\Nat}{\mathbb{N}}
\spnewtheorem{ourfact}[theorem]{Fact}{\bfseries}{\itshape}
\newcommand{\SB}{\{\,} \newcommand{\SM}{\;{|}\;} \renewcommand{\SE}{\,\}}
\begin{document}

\title{On the Complexity of Rainbow Coloring Problems}

\author{
Eduard Eiben\inst{1} \and
Robert Ganian\inst{1} \and
Juho Lauri\inst{2}
}

\institute{
TU Wien, Vienna, Austria\\\email{eiben@ac.tuwien.ac.at\\ rganian@gmail.com} \vspace{0.4em} \and
Tampere University of Technology, Tampere, Finland\\\email{juho.lauri@tut.fi}}
\maketitle

\begin{abstract}
An edge-colored graph $G$ is said to be \emph{rainbow connected} if between each pair of vertices there exists a path which uses each color at most once. The \emph{rainbow connection number}, denoted by $\rc(G)$, is the minimum number of colors needed to make $G$ rainbow connected. Along with its variants, which consider vertex colorings and/or so-called strong colorings, the rainbow connection number has been studied from both the algorithmic and graph-theoretic points of view. 

In this paper we present a range of new results on the computational complexity of computing the four major variants of the rainbow connection number. In particular, we prove that the  \textsc{Strong Rainbow Vertex Coloring} problem is $\NP$-complete even on graphs of diameter $3$. We show that when the number of colors is fixed, then all of the considered problems can be solved in linear time on graphs of bounded treewidth. Moreover, we provide a linear-time algorithm which decides whether it is possible to obtain a rainbow coloring by saving a fixed number of colors from a trivial upper bound.
Finally, we give a linear-time algorithm for computing the exact rainbow connection numbers for three variants of the problem on graphs of bounded vertex cover number.
\end{abstract}

\section{Introduction}
\label{sec:Introduction}
The concept of rainbow connectivity was introduced by Chartrand, Johns, McKeon, and Zhang in 2008~\cite{Chartrand2008} as an interesting connectivity measure motivated by recent developments in the area of secure data transfer. Over the past years, this strengthened notion of connectivity has received a significant amount of attention in the research community. The applications of rainbow connectivity are discussed in detail for instance in the recent survey~\cite{Li2012}, and various bounds are also available in~\cite{Chartrand2008b, Li2012b}. 

An edge-colored graph $G$ is said to be \emph{rainbow connected} if between each pair of vertices $a,b$ there exists an $a-b$ path which uses each color at most once; such a path is called \emph{rainbow}. The minimum number of colors needed to make $G$ rainbow connected is called the \emph{rainbow connection number} ($\rc$), and the \textsc{Rainbow Coloring} problem asks to decide if the rainbow connection number is upper-bounded by a number specified in the input. Precise definitions are given in Section~\ref{sec:prelim}.

The rainbow connection number and \textsc{Rainbow Coloring} have been studied from both the algorithmic and graph-theoretic points of view. On one hand, the exact rainbow connection numbers are known for a variety of simple graph classes, such as wheel graphs~\cite{Chartrand2008}, complete multipartite graphs~\cite{Chartrand2008}, unit interval graphs~\cite{Chandran2012b}, and threshold graphs~\cite{Chandran2012}. On the other hand, \textsc{Rainbow Coloring} is a notoriously hard problem. It was shown by Chakraborty {\em et al.}~\cite{Chakraborty2009} that already deciding if $\rc(G) \leq 2$ is $\NP$-complete, and Ananth {\em et al.}~\cite{Ananth2011} showed that for any $k>2$ deciding $\rc(G)\leq k$ is $\NP$-complete. In fact, Chandran and Rajendraprasad~\cite{Chandran2012} strengthened this result to hold for chordal graphs. In the same paper, the authors gave a linear time algorithm for rainbow coloring split graphs which form a subclass of chordal graphs with at most one more color than the optimum. 
\lv{Basavaraju {\em et al.}~\cite{Basavaraju2012} gave an $(r+3)$-factor approximation algorithm to rainbow color a general graph of radius $r$. }
Later on, the inapproximability of the problem was investigated by Chandran and Rajendraprasad~\cite{Chandran2013}. They proved that there is no polynomial time algorithm to rainbow color graphs with less than twice the minimum number of colors, unless $\P = \NP$. 
\lv{For chordal graphs, they gave a $5/2$-factor approximation algorithm, and proved that it is impossible to do better than $5/4$, unless $\P = \NP$.}

Several variants of the notion of rainbow connectivity have also been considered. Indeed, a similar concept was introduced for vertex-colored graphs by Krivelevich and Yuster~\cite{Krivelevich2010}. A vertex-colored graph $H$ is \emph{rainbow vertex connected} if there is a path whose internal vertices have distinct colors between every pair of vertices, and this gives rise to the \emph{rainbow vertex connection number} ($\rvc$). The \emph{strong rainbow connection number} ($\src$) was introduced and investigated also by Chartrand {\em et al.}~\cite{Chartrand2008b}; an edge-colored graph $G$ is said to be \emph{strong rainbow connected} if between each pair of vertices $a,b$ there exists a shortest $a-b$ path which is rainbow. The combination of these two notions, \emph{strong rainbow vertex connectivity} ($\srvc$), was studied in a graph theoretic setting by Li {\em et al.}~\cite{Li2014}.

Not surprisingly, the problems arising from the strong and vertex variants of rainbow connectivity are also hard. Chartrand {\em et al.} showed that $\rc(G) = 2$ if and only if $\src(G) = 2$~\cite{Chartrand2008}, and hence deciding if $\src(G) \leq k$ is $\NP$-complete for $k=2$. The problem remains $\NP$-complete for $k > 2$ for bipartite graphs~\cite{Ananth2011}, and also for split graphs~\cite{Keranen2014}. Furthermore, the strong rainbow connection number of an $n$-vertex bipartite graph cannot be approximated within a factor of $n^{1/2-\epsilon}$, where $\epsilon > 0$ unless $\NP = \ZPP$~\cite{Ananth2011}, and the same holds for split graphs~\cite{Keranen2014}. The computational aspects of the rainbow vertex connection numbers have received less attention in the literature. Through the work of Chen {\em et al.}~\cite{Chen2011} and Chen {\em et al.}~\cite{Chen2013}, it is known that deciding if $\rvc(G) \leq k$ is $\NP$-complete for every $k \geq 2$. However, to the best of our knowledge, the complexity of deciding whether $\srvc(G)\leq k$ (the $k$-\srvcprob{} problem) has not been previously considered.

In this paper, we present new positive and negative results for all four variants of the rainbow coloring problems discussed above.
\begin{itemize}
\item In Section~\ref{sec:hardness_srvc}, we prove that $k$-\srvcprob{} is $\NP$-complete for every $k\geq 3$ even on graphs of diameter $3$. Our reduction relies on an intermediate step which proves the $\NP$-hardness of a more general problem, the \textsc{$k$-Subset Strong Rainbow Vertex Coloring} problem. We also provide bounds for approximation algorithms (under established complexity assumptions), see Corollary~\ref{cor:noapprox}. 
\item In Section~\ref{sec:mso}, we show that all of the considered problems can be formulated in monadic second order (MSO) logic. In particular, this implies that for every fixed $k$, all of the considered problems can be solved in linear time on graphs of bounded treewidth, and the vertex variants can be solved in cubic time on graphs of bounded clique-width. 
\item In Section~\ref{sec:saving}, we investigate the problem from a different perspective: we ask whether, given an $n$-vertex graph $G$ and an integer $k$, it is possible to color $G$ using $k$ colors less than the known upper bound. Here we employ a win-win approach and show that this problem can be solved in time $\bigoh(n)$ for any fixed $k$. 
\item In the final Section~\ref{sec:vc}, we show that in the general case when $k$ is not fixed, three of the considered problems admit linear-time algorithms on graphs of bounded vertex cover number. This is also achieved by exploiting a win-win approach, where we show that either $k$ is bounded by a function of the vertex cover number and hence we can apply the result of Section~\ref{sec:mso}, or $k$ is sufficiently large which allows us to exploit the structure of the graph to precisely compute the connectivity number.	
\end{itemize}

\sv{\smallskip \noindent {\emph{Statements whose proofs are located in the appendix are marked with $\star$.}}}

\smallskip \noindent {\emph{The authors acknowledge support by the Austrian Science Fund (FWF, projects P26696 and W1255-N23).}}

\section{Preliminaries}
\label{sec:prelim}

\subsection{Graphs and Rainbow Connectivity}
We refer to~\cite{Diestel2010} for standard graph-theoretic notions. We use $[i]$ to denote the set $\{1,2,\dots,i\}$. All graphs considered in this paper are simple and undirected. The \textit{degree} of a vertex is the number of its incident edges, and a vertex is a \emph{pendant} if it has degree $1$. We will often use the shorthand $ab$ for the edge $\{a,b\}$. For a vertex set $X$, we use $G[X]$ to denote the subgraph of $G$ induced on $X$.

A \emph{vertex coloring} of a graph $G=(V,E)$ is a mapping from $V$ to $\Nat$, and similarly an \emph{edge coloring} of $G$ is a mapping from $E$ to $\Nat$; in this context, we will often refer to the elements of $\Nat$ as colors. 
An $a-b$ \emph{path} $P$ \emph{of length} $p$ is a finite sequence of the form $(a=v_0,e_0,v_1,e_1,\dots b=v_p)$, where $v_0,v_1,\dots v_p$ are distinct vertices and $e_0,\dots e_{p-1}$ are distinct edges and each edge $e_j$ is incident to $v_j$ and $v_{j+1}$. An $a-b$ path of length $p$ is a \emph{shortest path} if every $a-b$ path has length at least $p$. The \textit{diameter} of a graph $G$ is the length of its longest shortest path, denoted by $\diam(G)$. Given an edge (vertex) coloring $\alpha$ of $G$, a color $x\in \Nat$ \emph{occurs} on a path $P$ if there exists an edge (an internal vertex) $z$ on $P$ such that $\alpha(z)=x$.

A vertex or edge coloring of $G$ is \emph{rainbow} if between each pair of vertices $a,b$ there exists an $a-b$ path $P$ such that each color occurs at most once on $P$; in this case we say that $G$ is \emph{rainbow connected} or \emph{rainbow colored}.
We denote by $\rc(G)$ the minimum $i\in \Nat$ such that there exists a rainbow edge coloring $\alpha:E\rightarrow [i]$. Similarly, $\rvc(G)$ denotes the minimum $i\in \Nat$ such that there exists a rainbow vertex coloring $\alpha:V\rightarrow [i]$.
Furthermore, an edge or vertex coloring of $G$ is a \emph{strong rainbow coloring} if between each pair of vertices $a,b$ there exists a shortest $a-b$ path $P$ such that each color occurs at most once on $P$. We denote by $\src(G)$ ($\srvc(G)$) the minimum $i\in \Nat$ such that there exists a strong rainbow edge (vertex) coloring $\alpha:E\rightarrow [i]$ ($\alpha:V\rightarrow [i]$).

Let $G$ and $H$ be two graphs with $n$ and $n'$ vertices, respectively. The \textit{corona} of $G$ and $H$, denoted by $G \circ H$, is the disjoint union of $G$ and $n$ copies of $H$ where the $i$-th vertex of $G$ is connected by an edge to every vertex of the $i$-th copy of $H$. Clearly, the corona $G \circ H$ has $n(1+n')$ vertices. \lv{Coronas of graphs were first studied by Frucht and Harary~\cite{Frucht1970}.}
  
\subsection{Problem Statements}
\label{sub:prob}
Here we formally state the problems studied in this work.

\begin{framed}
\vspace{-0.2cm}
\noindent \textsc{Rainbow $k$-Coloring ($k$-\rcprob)} \\
\textbf{Instance:} A connected undirected graph $G=(V,E)$. \\ 
\textbf{Question:} Is $\rc(G) \leq k$?
\vspace{-0.2cm}
\end{framed}

\textsc{Strong Rainbow $k$-Coloring ($k$-\srcprob)}, \textsc{Rainbow Vertex $k$-Coloring ($k$-\rvcprob)} and \textsc{Strong Rainbow Vertex $k$-Coloring ($k$-\srvcprob)} are then defined analogously for $\src(G)$, $\rvc(G)$ and $\srvc(G)$, respectively.
We also consider generalized versions of these problems, where $k$ is given as part of the input.

\begin{framed}
\vspace{-0.2cm}
\noindent \textsc{Rainbow Coloring (\rcprob)} \\
\textbf{Instance:} A connected undirected graph $G=(V,E)$, and a positive integer~$k$. \\ 
\textbf{Question:} Is $\rc(G) \leq k$?
\vspace{-0.2cm}
\end{framed}

The problems $\srcprob$, $\rvcprob$, and $\srvcprob$ are also defined analogously. In Section~\ref{sec:saving} we consider the ``saving'' versions of the problem, which ask whether it is possible to improve upon the trivial upper bound for the number of colors.

\begin{framed}
\vspace{-0.2cm}
\noindent \textsc{Saving $k$ Rainbow Colors ($k$-\savingrc)} \\
\textbf{Instance:} A connected undirected graph $G=(V,E)$. \\ 
\textbf{Question:} Is $\rc(G) \leq |E|-k$?\medskip
\medskip

\noindent \textsc{Saving $k$ Rainbow Vertex Colors ($k$-\savingrvc)} \\
\textbf{Instance:} A connected undirected graph $G=(V,E)$. \\ 
\textbf{Question:} Is $\rvc(G) \leq |V|-k$?
\vspace{-0.2cm}
\end{framed}


\subsection{Structural Measures}
Several of our results utilize certain structural measures of graphs. We will mostly be concerned with the \emph{treewidth} and the \emph{vertex cover number} of the input graph. Section~\ref{sec:mso} also mentions certain implications of our results for graphs of bounded \emph{clique-width}, the definition of which can be found for instance in~\cite{CourcelleMakowskyRotics00}.

A \emph{tree decomposition} of $G$ is a pair 
$(T,\{X_i : i\in I\})$
where $X_i \subseteq V$, $i\in I$, and $T$ is a tree with elements
of $I$ as nodes
such that:
\begin{enumerate}
\item for each edge $uv\in E$, there is an $i\in I$ such that $\{u,v\} 
\subseteq X_i$, and
\item for each vertex $v\in V$, $T[\SB i\in I \SM v\in X_i \SE]$ is a (connected) tree with at least one node.
\end{enumerate}
The \emph{width} of a tree decomposition is $\max_{i \in I} |X_i|-1$.
The \emph{treewidth}~\cite{Robertson1986} of $G$
is the minimum width taken over all tree decompositions
of $G$ and it is denoted by $\tw(G)$. 


\begin{ourfact}[\hspace{1sp}\cite{Bodlaender1996}]
\label{fact:findtw}
There exists an algorithm which, given a graph $G$ and an integer $p$,
runs in time $2^{p^{\bigoh(1)}}\cdot (|V(G)|+|E(G)|)$, and either
outputs a tree decomposition of $G$ of width at most $p$ or correctly
determines that $\tw(G)>p$.
\end{ourfact}

A \emph{vertex cover} of a graph $G=(V,E)$ is a set $X\subseteq V$ such that each edge in $G$ has at least one endvertex in $X$. The cardinality of a minimum vertex cover in $G$ is denoted as $\vcn(G)$. Given a vertex cover $X$, a \emph{type} $T$ is a subset of $V\setminus X$ such that any two vertices in $T$ have the same neighborhood; observe that any graph contains at most $2^{|X|}$ many distinct types.


\subsection{Monadic Second Order Logic}
We assume that we have an infinite supply of individual variables,
denoted by lowercase letters $x,y,z$, and an infinite supply of set
variables, denoted by uppercase letters $X,Y,Z$. \emph{Formulas} of
MSO$_2$ logic are constructed from atomic
formulas $I(x,y)$, $x\in X$, and $x = y$ using the connectives $\neg$
(negation), $\wedge$ (conjunction) and existential quantification
$\exists x$ over individual variables as well as existential
quantification $\exists X$ over set variables. Individual variables
range over vertices and edges, and set variables range either over sets of
vertices or over sets of edges. The atomic formula $I(x,y)$ expresses that vertex $x$ is incident to edge $y$, $x = y$
expresses equality, and $x\in X$ expresses that $x$ is in the set
$X$. \lv{From this, we define the semantics of MSO$_2$ logic
in the standard way.}

MSO$_1$ logic is defined similarly as MSO$_2$ logic, with the following distinctions. Individual variables range only over vertices, and set variables only range over sets of vertices. The atomic formula $I(x,y)$ is replaced by $E(x,y)$, which expresses that vertex $x$ is adjacent to vertex $y$.

\emph{Free and bound variables} of a formula are defined in the usual way. A
\emph{sentence} is a formula without free variables. It is known that MSO$_2$ formulas can be checked efficiently as long as the graph has bounded tree-width. 


\begin{ourfact}[\hspace{1sp}\cite{Courcelle90a}]\label{fact:msotreewidth}
Let $\phi$ be a fixed \emph{MSO}$_2$ sentence and $p\in \Nat$ be a constant. Given an $n$-vertex graph $G$ of treewidth at most $p$, it is possible to decide whether $G\models \phi$ in time $\bigoh(n)$.
\end{ourfact}

Similarly, MSO$_1$ formulas can be checked efficiently as long as the graph has bounded \emph{clique-width}~\cite{CourcelleMakowskyRotics00} (or, equivalently, \emph{rank-width}~\cite{GanianHlineny10}). In particular, while the formula can be checked in linear time if a suitable rank- or clique-decomposition is provided, current algorithms for finding (or approximating) such a decomposition require cubic time.


\begin{ourfact}[\hspace{1sp}\cite{CourcelleMakowskyRotics00,GanianHlineny10}]
\label{fact:msorankwidth}
Let $\phi$ be a fixed \emph{MSO}$_1$ sentence and $p\in \Nat$ be a constant. Given an $n$-vertex graph $G$ of clique-width at most $p$, it is possible to decide whether $G\models \phi$ in time $\bigoh(n^3)$.
\end{ourfact}

\section{Hardness of Strong Rainbow Vertex $k$-Coloring}
\label{sec:hardness_srvc}
It is easy to see that $\srvc(G) = 1$ if and only if $\diam(G) = 2$. We will prove that deciding if $\srvc(G) \leq k$ is $\NP$-complete for every $k \geq 3$ already for graphs of diameter 3. This is done by first showing hardness of an intermediate problem, described below. 

In the \textsc{$k$-Subset Strong Rainbow Vertex Coloring} problem ($\subsetsrvc$) we are given a graph $G$ which is a corona of a complete graph and $K_1$, and a set $P$ of pairs of pendants in $G$. The goal is to decide if the vertices of $G$ can be colored with $k$ colors such that each pair in $P$ is connected by a vertex rainbow shortest path. We will first show this intermediate problem is $\NP$-complete by reducing from the classical $k$-vertex coloring problem: given a graph $G$, decide if there is an assignment of $k$ colors to the vertices of $G$ such that adjacent vertices receive a different color. The $k$-vertex coloring problem is well-known to be $\NP$-complete for every $k \geq 3$.

\sv{\begin{lemma}[$\star$]}
\lv{\begin{lemma}}
\label{lem:ssrvc}
The $\subsetsrvc$ problem is $\NP$-complete for every $k \geq 3$.
\end{lemma}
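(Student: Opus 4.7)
The plan is to reduce from the classical $k$-vertex coloring problem, which is $\NP$-complete for every $k \geq 3$. Membership in $\NP$ is immediate: guess a coloring, then for each pair in $P$ verify in polynomial time that there is a shortest path whose internal vertices are pairwise distinctly colored.

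For hardness, given an instance $H$ of $k$-vertex coloring with $V(H) = \{v_1,\dots,v_n\}$, I would construct $G = K_n \circ K_1$, labeling the vertices of the underlying $K_n$ as $v'_1,\dots,v'_n$ (in bijection with $V(H)$) and letting $u_i$ be the pendant attached to $v'_i$. I would then set
\[
P := \SB (u_i,u_j) \SM v_iv_j \in E(H) \SE.
\]
Clearly this construction is computable in polynomial time.

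The key structural observation is that, for any two distinct pendants $u_i, u_j$, since each has degree $1$, any $u_i$--$u_j$ path must leave $u_i$ through $v'_i$ and enter $u_j$ through $v'_j$; as $v'_iv'_j \in E(K_n)$, the unique shortest $u_i$--$u_j$ path is $u_i,\,v'_i,\,v'_j,\,u_j$, of length $3$. Its internal vertices are exactly $v'_i$ and $v'_j$, so a vertex coloring $\alpha$ of $G$ yields a rainbow shortest path between $u_i$ and $u_j$ if and only if $\alpha(v'_i) \neq \alpha(v'_j)$. Hence the $\subsetsrvc$ constraint on $(u_i,u_j) \in P$ translates exactly to the properness constraint on the edge $v_iv_j$ of $H$.

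From this equivalence both directions follow immediately: a proper $k$-coloring $c$ of $H$ is lifted to a valid coloring of $G$ by setting $\alpha(v'_i) := c(v_i)$ and coloring the pendants arbitrarily; conversely, from any valid $k$-coloring $\alpha$ of the $\subsetsrvc$ instance, the assignment $c(v_i) := \alpha(v'_i)$ is a proper $k$-coloring of $H$. I do not anticipate a serious obstacle here, as the shortest-path structure in $K_n \circ K_1$ is completely forced by the pendants; the only mild care needed is to note that the pendants themselves are not internal on these paths and so their colors play no role, which is exactly what makes the reduction clean.
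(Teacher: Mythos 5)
Your proposal is correct and follows essentially the same reduction as the paper: both construct $K_n \circ K_1$ with pendants, take $P$ to be the pairs of pendants corresponding to edges of the coloring instance, and use the fact that the unique shortest path between two pendants has exactly the two clique vertices as internal vertices, so the rainbow condition is equivalent to properness. Your write-up is if anything slightly more explicit about the uniqueness of the shortest path and about $\NP$ membership.
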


\newcommand{\pfssrvc}[0]{
\begin{proof}
Let $G=(V,E)$ be an instance of the $k$-vertex coloring problem, where $k \geq 3$. We will construct an instance $\langle G',P \rangle$ of the $\subsetsrvc$ problem such that $\langle G',P \rangle$ is a YES-instance if and only if $G$ is $k$-vertex colorable.

The graph $G' = (V',E')$ along with the set of pairs $P$ are constructed as follows:
\begin{itemize}
\item $V' = V \cup \{ p_v \mid v \in V \}$,
\item $E' = \{ uv \mid u,v \in V \wedge u \neq v \} \cup \{ vp_v \mid v \in V \}$, and
\item $P = \{ \{p_u,p_v\} \mid uv \in E \}$.
\end{itemize}
Clearly, $G' = K_{|V|} \circ K_1$. This completes the construction of $G'$.

Suppose $G$ is $k$-vertex colorable. Let $c$ be the color assigned to vertex $v$ in $V$. We assign the color $c$ to both $v$ and $p_v$ in $G'$. Observe that the shortest path between any pair of vertices in $G'$ is unique. It is then straightforward to verify that any pair in $P$ is strong rainbow vertex connected.

For the other direction, suppose there is a vertex coloring of $G'$ using $k$ colors under which there is a vertex rainbow shortest path between every pair in $P$. Since any two vertices $\{p_u,p_v\} \in P$ are strong rainbow vertex connected, the two internal vertices on the unique $p_u-p_v$ shortest path have distinct colors. Thus by assigning to the vertex $v \in V$ the color on the corresponding vertex $v' \in (V' \setminus \{ p_v \mid v \in V \})$ we get a proper vertex coloring of $G$. This completes the proof.
\end{proof}
}
\lv{\pfssrvc}
\sv{\begin{proof}[Sketch]
Let $G=(V,E)$ be an instance of the $k$-vertex coloring problem, where $k \geq 3$. We will construct an instance $\langle G',P \rangle$ of the $\subsetsrvc$ problem such that $\langle G',P \rangle$ is a YES-instance if and only if $G$ is $k$-vertex colorable. 

The graph $G' = (V',E')$ along with the set of pairs $P$ are constructed as follows:
\begin{itemize}
\item $V' = V \cup \{ p_v \mid v \in V \}$,
\item $E' = \{ uv \mid u,v \in V \wedge u \neq v \} \cup \{ vp_v \mid v \in V \}$, and
\item $P = \{ \{p_u,p_v\} \mid uv \in E \}$.
\end{itemize}
Clearly, $G' = K_{|V|} \circ K_1$. This completes the construction of $G'$. Satisfying color assignments for $V$ in $G$ are necessarily satisfying color assignments for $V$ in $G'$, and vice versa.
\end{proof}
}

We are now ready to prove the following.
\lv{\begin{theorem}}
\sv{\begin{theorem}[$\star$]}
\label{thm_srvc_hardness}
The problem $k$-\srvcprob{} is NP-complete for every integer $k \geq 3$, even when the input is restricted to graphs of diameter $3$.
\end{theorem}

\newcommand{\pfsrvc}[0]{
\begin{proof}
Let $k\geq 3$ and $\langle G=(V,E),P \rangle$ be an instance of the $\subsetsrvc$ problem. We will construct a graph $G'=(V',E')$ that is strong rainbow vertex colorable with $k$ colors if and only if $\langle G=(V,E),P \rangle$ is a YES-instance of $\subsetsrvc$.

Let $V_1$ denote the set of pendant vertices in $G$. For every vertex $v \in V_1$ we introduce a new vertex $x_v$. For every pair of pendant vertices $\{u,v\} \not \in P$, we add two vertices $x^1_{uv}$ and $x^2_{uv}$. We also add two new vertices $s$ and $t$. In the following, we denote by $k_v$, where $v \in V_1$,  the unique vertex that $v$ is adjacent to in $G$. Formally, we construct a graph $G' = (V',E')$ such that:
\begin{itemize}
\item $V' = V \cup \{ x_v \mid v \in V_1 \} \cup \{ x^1_{uv},x^2_{uv} \mid \{u,v\} \in {V_1 \choose 2} \setminus P \} \cup \{ s,t \}$,
\item $E' = E \cup E_1 \cup E_2 \cup E_3 \cup E_4$,
\item $E_1 = \{ vx_v,sx_v,tx_v \mid v \in V_1 \}$,
\item $E_2 = \{ ux^1_{uv}, x^1_{uv}x^2_{uv},x^2_{uv}v \mid \{u,v\} \in {V_1 \choose 2} \setminus P \}$,
\item $E_3 = \{ sx^1_{uv},tx^2_{uv}, k_ux^1_{uv}, k_vx^2_{uv} \mid \{u,v\} \in {V_1 \choose 2} \setminus P \}$, and
\item $E_4 = \{ sy,ty \mid y \in V \setminus V_1 \}$.
\end{itemize}
This completes the construction of $G'$. It is straightforward to verify $\diam(G') = 3$, and this is realized between any pair of vertices in $V_1$. An example illustrating the reduction is shown in Figure~\ref{fig:reduction_srvc}. 

First, suppose $G'$ admits a strong rainbow vertex coloring $\phi$ using $k$ colors. Observe that for each $\{u,v\}\in P$, the shortest path between $u$ and $v$ in $G'$ is unique. Therefore $k_u$ and $k_v$ must receive distinct colors by $\phi$. Hence the restriction of $\phi$ to $V$ witnesses that $\langle G, P \rangle$ is a YES-instance of \subsetsrvc.


On the other hand, suppose $\langle G, P \rangle$ is $k$-subset strong rainbow vertex connected under some coloring $\phi : V \to \{c_1,\ldots,c_k\}$. We will describe an extended $k$-coloring $\phi'$ under which $G'$ is strong rainbow vertex connected. We retain the original coloring on the vertices of $G$, i.e., $\phi'(v) = \phi(v)$ for every $v \in V$. The remaining vertices in $G'$ receive colors as follows:
\begin{itemize}
\item $\phi'(x_v) = c_1$, for every $v \in V_1$,
\item $\phi'(x^1_{uv}) = c_1$, $\phi'(x^2_{uv}) = c_2$, for every $\{u,v\} \in {V_1 \choose 2} \setminus P$, and
\item $\phi'(s) = c_2$, and $\phi'(t) = c_2$.
\end{itemize}
Since each pair of vertices $\{a,b\}\in V'$ at distance at most $2$ are always rainbow connected regardless of the chosen coloring, and each pair of vertices $\{u,v\} \in {V_1 \choose 2} \setminus P$ are connected by the path through $x^1_{uv},x^2_{uv}$, it is straightforward to verify that $G'$ is indeed strong rainbow vertex connected under $\phi'$.
\end{proof}}
\lv{\pfsrvc}
\sv{\begin{proof}[Sketch]
Let $k\geq 3$ and $\langle G=(V,E),P \rangle$ be an instance of the $\subsetsrvc$ problem. We will construct a graph $G'=(V',E')$ that is strong rainbow vertex colorable with $k$ colors if and only if $\langle G=(V,E),P \rangle$ is a YES-instance of $\subsetsrvc$.

Let $V_1$ denote the set of pendant vertices in $G$. For every vertex $v \in V_1$ we introduce a new vertex $x_v$. For every pair of pendant vertices $\{u,v\} \not \in P$, we add two vertices $x^1_{uv}$ and $x^2_{uv}$. We also add two new vertices $s$ and $t$. In the following, we denote by $k_v$, where $v \in V_1$,  the unique vertex that $v$ is adjacent to in $G$. Formally, we construct a graph $G' = (V',E')$ such that:
\begin{itemize}
\item $V' = V \cup \{ x_v \mid v \in V_1 \} \cup \{ x^1_{uv},x^2_{uv} \mid \{u,v\} \in {V_1 \choose 2} \setminus P \} \cup \{ s,t \}$,
\item $E' = E \cup E_1 \cup E_2 \cup E_3 \cup E_4$,
\item $E_1 = \{ vx_v,sx_v,tx_v \mid v \in V_1 \}$,
\item $E_2 = \{ ux^1_{uv}, x^1_{uv}x^2_{uv},x^2_{uv}v \mid \{u,v\} \in {V_1 \choose 2} \setminus P \}$,
\item $E_3 = \{ sx^1_{uv},tx^2_{uv}, k_ux^1_{uv}, k_vx^2_{uv} \mid \{u,v\} \in {V_1 \choose 2} \setminus P \}$, and
\item $E_4 = \{ sy,ty \mid y \in V \setminus V_1 \}$.
\end{itemize}
This completes the construction of $G'$. It is easy to verify $\diam(G') = 3$.
\end{proof}
}

\begin{figure}[t]
\begin{center}
	\includegraphics[scale=0.7]{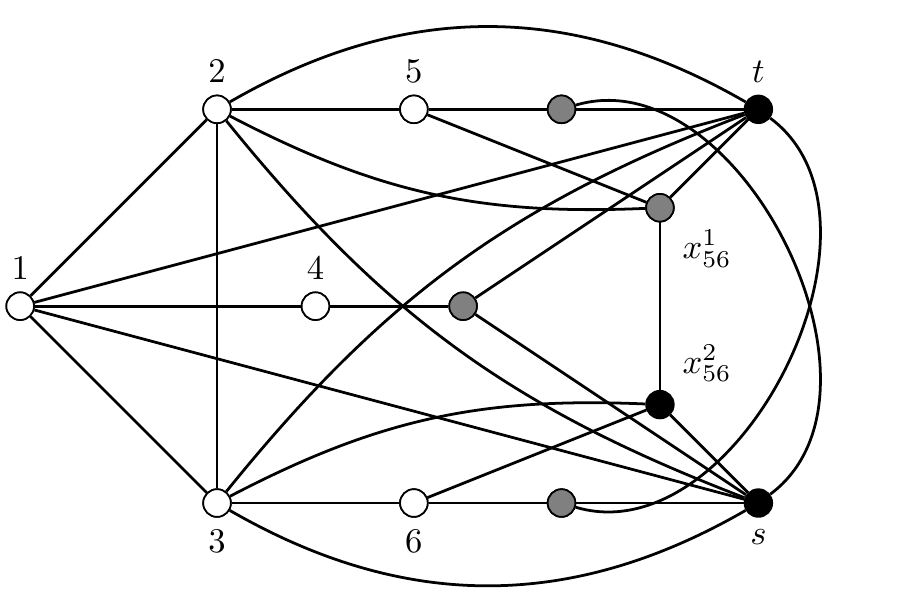}
\end{center}
\caption{The graph $K_3 \circ K_1$ transformed to a graph of diameter 3 with $P = \{ \{4,5\}, \{4,6\} \}$. The color $c_1$ is represented with grey, and the color $c_2$ with black. White vertices represent an unknown vertex coloring under which the pairs in $P$ are strong rainbow vertex connected.}
\vspace{-0.5cm}
\label{fig:reduction_srvc}
\end{figure}

It can be observed that the size of the above reduction does not depend on $k$, the number of colors. In fact, if the instance of the $k$-vertex coloring problem has $n$ vertices, then the graph $G'$ we build in Theorem~\ref{thm_srvc_hardness} has no more than $\bigoh(n^2)$ vertices. Furthermore, a strong rainbow vertex coloring of $G'$ gives us a solution to the $k$-vertex coloring problem. Since the vertex coloring number of an $n$-vertex graph cannot be approximated within a factor of $n^{1-\epsilon}$ for any $\epsilon > 0$ unless $\P = \NP$~\cite{Zuckerman2006}, we obtain the following corollary.

\begin{corollary}
\label{cor:noapprox}
There is no polynomial time algorithm for approximating the strong rainbow vertex connection number of an $n$-vertex graph of bounded diameter within a factor of $n^{1/2-\epsilon}$ for any $\epsilon > 0$, unless $\P = \NP$.
\end{corollary}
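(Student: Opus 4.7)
The plan is to derive the inapproximability of \srvcprob{} directly from the hardness of approximating the chromatic number, piggybacking on the reduction used to prove Theorem~\ref{thm_srvc_hardness}. First, I would argue that the reduction is gap-preserving in a very strong sense. Inspecting the two directions of the proof of Theorem~\ref{thm_srvc_hardness}, any strong rainbow vertex coloring of $G'$ using $c$ colors, when restricted to $V\subseteq V'$, yields a proper $c$-vertex coloring of $G$; conversely, a proper $c$-vertex coloring of $G$ (for $c\geq 3$) extends to a strong rainbow vertex coloring of $G'$ using the same $c$ colors, since the additional colors required by the extension never exceed those already present. Consequently $\chi(G)=\srvc(G')$ on the instances produced by the reduction.

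Next, I would record the polynomial blow-up. The construction introduces $\bigoh(|V_1|^2)$ vertices of the form $x^1_{uv}, x^2_{uv}$ together with $\bigoh(|V|)$ vertices $x_v$ and the two global vertices $s,t$, so the resulting graph $G'$ has $N\leq C\cdot n^2$ vertices for some absolute constant $C$, where $n=|V(G)|$. Crucially, $G'$ has diameter $3$ by construction.

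Now I would run the standard approximation-preserving argument. Suppose, for contradiction, that for some $\epsilon>0$ there is a polynomial-time algorithm $\mathcal{A}$ that, on any $N$-vertex graph of bounded diameter, produces a strong rainbow vertex coloring within a factor $N^{1/2-\epsilon}$ of $\srvc$. Given a graph $G$ on $n$ vertices, build $G'$ via the reduction of Theorem~\ref{thm_srvc_hardness}, run $\mathcal{A}$ on $G'$, and output the restriction of the obtained coloring to $V$. By the preceding paragraph this is a proper vertex coloring of $G$ whose size is at most
\[
N^{1/2-\epsilon}\cdot \srvc(G')\;\leq\;(Cn^2)^{1/2-\epsilon}\cdot \chi(G)\;\leq\;C'\cdot n^{1-2\epsilon}\cdot \chi(G),
\]
so we obtain a polynomial-time $n^{1-\epsilon'}$-approximation of $\chi(G)$ for any $\epsilon'<2\epsilon$ and sufficiently large $n$. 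This contradicts the result of Zuckerman~\cite{Zuckerman2006}, yielding the stated bound.

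The only step that requires any care is the first one: verifying that the reduction preserves the optimum exactly (and not only the yes/no threshold at a single value of $k$), since the inapproximability transfer relies on bounding $\srvc(G')$ in terms of $\chi(G)$ on all instances, not just the threshold. This follows because the auxiliary colors $c_1,c_2$ prescribed in the backward direction of the proof of Theorem~\ref{thm_srvc_hardness} are reused from the palette of $\phi$ itself, so the extension $\phi'$ never uses more colors than $\phi$; the rest of the argument is a routine substitution of parameters.
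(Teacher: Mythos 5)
Your proposal is correct and follows essentially the same route as the paper: the paper likewise observes that the composed reduction blows up the instance to only $\bigoh(n^2)$ vertices of diameter $3$, that a strong rainbow vertex coloring of $G'$ yields a proper coloring of $G$ with the same number of colors, and then invokes Zuckerman's $n^{1-\epsilon}$-inapproximability of the chromatic number. Your additional check that the backward extension reuses colors from the existing palette (so the optimum, not just the threshold, is preserved) is exactly the point the paper leaves implicit.
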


\section{MSO Formulations}
\label{sec:mso}

This section will present formulations of the $k$-coloring variants of rainbow connectivity in MSO logic, along with their algorithmic implications.

\begin{lemma}
\label{lem:MSOstandard}
For every $k\in \Nat$ there exists a \emph{MSO}$_1$ formula $\phi_k$ such that for every graph $G$, it holds that $G\models \phi$ iff $G$ is a YES-instance of $k$-\rvcprob. Similarly, for every $k\in \Nat$ there exists a \emph{MSO}$_2$ formula $\psi_k$ such that for every graph $G$, it holds that $G\models \psi$ iff $G$ is a YES-instance of $k$-\rcprob.
\end{lemma}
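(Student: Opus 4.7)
The plan is to write both formulas by using $k$ existentially quantified set variables as colour classes and, for each pair of vertices, quantifying an inner ``witness'' subgraph realising a rainbow connection. The key observation is that we never have to describe an $a$-$b$ path as an explicit sequence of vertices and edges (which MSO cannot do uniformly in the length of the path): it is enough to quantify a vertex set (for the vertex variant) or an edge set (for the edge variant) whose induced subgraph is connected and whose non-endpoint elements use each colour at most once.

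For $k$-\rvcprob{}, I would give $\phi_k$ the shape
\[
\phi_k \equiv \exists C_1 \cdots \exists C_k \bigl( \mathrm{Part}(C_1, \dots, C_k) \wedge \forall a \forall b \bigl( a \neq b \to \exists X \, \mathrm{Wit}(X, a, b) \bigr) \bigr),
\]
where $\mathrm{Part}$ is the obvious MSO$_1$ sentence stating that $C_1, \dots, C_k$ is a partition of $V(G)$, and $\mathrm{Wit}(X, a, b)$ is the conjunction of (i) $a, b \in X$; (ii) $G[X]$ is connected, expressed in the standard way by ``no nonempty proper $Y \subsetneq X$ satisfies $\forall u \in Y \, \forall v \in X \setminus Y \, \neg E(u, v)$''; and (iii) for each $i \in [k]$, at most one vertex of $X \setminus \{a, b\}$ lies in $C_i$, written as $\forall u \forall v \bigl( (u \in X \cap C_i \wedge v \in X \cap C_i \wedge u \neq a \wedge u \neq b \wedge v \neq a \wedge v \neq b) \to u = v \bigr)$. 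Since $k$ is fixed, the conjunction over $i \in [k]$ is of constant size.

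Correctness is immediate in both directions: if $\mathrm{Wit}(X, a, b)$ holds, then any $a$-$b$ path inside the connected subgraph $G[X]$ has all internal vertices in $X \setminus \{a, b\}$, which are pairwise differently coloured by~(iii), so the path is rainbow; conversely, the vertex set of any rainbow $a$-$b$ path directly witnesses $\mathrm{Wit}$. This frees us from having to specify the path itself.

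The MSO$_2$ formula $\psi_k$ is built analogously, the only changes being that $C_1, \dots, C_k$ are now \emph{edge}-set variables partitioning $E(G)$ and that the witness is an edge set $F$ for which (i)~$a$ and $b$ lie in the same connected component of $(V(G), F)$, expressed by quantifying a vertex set $U$ containing $a$ and $b$ and asserting that every bipartition of $U$ separating $a$ from $b$ has some edge of $F$ crossing it (using the MSO$_2$ incidence predicate $I$), and (ii)~$|F \cap C_i| \leq 1$ for each $i \in [k]$, again a constant-size conjunction. Any walk using edges in $F$ contains a simple $a$-$b$ path whose edges form a subset of $F$ and are therefore distinctly coloured, and conversely a rainbow $a$-$b$ path immediately furnishes a suitable $F$. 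The only mildly technical ingredient is the MSO encoding of connectivity, which is standard in both the vertex and the edge settings; the dependence on $k$ appears only through the unfolded conjunctions over $i \in [k]$, so $\phi_k$ and $\psi_k$ have constant size for each fixed $k$.
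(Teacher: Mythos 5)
Your proof is correct, but it takes a genuinely different route from the paper's. The paper encodes the rainbow path \emph{explicitly}: it observes that a rainbow path can use each of the $k$ colors at most once and therefore has bounded length, so it writes a disjunction over lengths $i\in[k]$ and existentially quantifies $i$ individual edge (or vertex) variables, checked by auxiliary $\path$ and $\rainbow$ predicates. You instead quantify a single \emph{set} witness (a vertex set $X$ inducing a connected subgraph, resp.\ an edge set $F$ connecting $a$ and $b$) subject to the constraint that each color class meets it at most once, and let connectivity guarantee that some path through the witness exists; the extracted path is then automatically rainbow, and conversely the vertex/edge set of a rainbow path is itself a valid witness. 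This avoids any need to bound the path length and replaces the explicit path predicate with the standard MSO encoding of connectivity --- in fact it is essentially the same device the paper itself deploys later, in Theorem~\ref{thm:savingrc}, via the $\vertexconnects$ and $\edgeconnects$ predicates of Facts~\ref{fact:mso1connects} and~\ref{fact:mso2connects}, where the explicit-path trick would not work because the total number of colors is unbounded there. The paper's encoding is more elementary and self-contained; yours is more uniform and transfers directly to the strong variants and the saving variants with only local changes. Both yield formulas of constant size for each fixed $k$, so the algorithmic consequences (Theorem~\ref{thm:twrwfpt}) are identical. One small presentational remark: for ``$a$ and $b$ lie in the same component of $(V,F)$'' the cleanest MSO$_2$ form is the purely universal one --- for every vertex set $U$ with $a\in U$ and $b\notin U$ some edge of $F$ has exactly one endpoint in $U$ --- rather than the existentially-quantified variant you sketch, though both are expressible.
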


\begin{proof}
In the case of $k$-\rcprob, we wish to partition the edges of the graph $G=(V,E)$ into $k$ color classes $C_1,\ldots,C_k$ such that each pair of vertices is connected by a rainbow path. Let us consider the following MSO$_2$ formula $\psi_k$.
\begin{equation*}
\begin{split}
\psi_k 	&\coloneqq \exists C_1,\ldots,C_k \subseteq E \Big( \forall e \in E \Big( e \in C_1 \vee \cdots \vee e \in C_k \Big) \Big) \\
	&\wedge \Big( \forall i,j\in [k], i\neq j: (C_i \cap C_j = \emptyset) \Big) \\
	&\wedge \Big( \forall u,v \in V \Big( (u \neq v) \implies \bigvee_{1 \leq i \leq k} \Big( \exists e_1,\ldots,e_i \in E \big( \path(u,v,e_1,\ldots,e_i) \\
	&\wedge \rainbow(e_1,\ldots,e_i) \big) \Big) \Big) \Big), 
\end{split}
\end{equation*}
\lv{where the auxiliary predicates are defined as
\begin{equation*}
\begin{split}
\path(u,v,e_1,\ldots,e_\ell) &\coloneqq \exists v_1,\ldots,v_{\ell-1} \in V \\
	&\Big( \forall i,j \in [\ell-1], i\neq j: (v_i \neq v_j) \Big) \\
	&\wedge \inc(e_1,u) \wedge \inc(e_\ell,v)\\
&\wedge \Big( \forall i\in[\ell-1]: (\inc(e_i,v_i) \wedge \inc(e_{i+1},v_i)) \Big)
\end{split}
\end{equation*}
and
\begin{equation*}
\begin{split}
\rainbow(e_1,\ldots,e_\ell) &\coloneqq \forall i\in[\ell] ~ \exists j\in[k]: \Big( e_i \in C_j 
	\wedge ( \forall p\neq i: e_p \notin C_j ) \Big).
\end{split}
\end{equation*}}
Here, $\path(u,v,e_1,\ldots,e_\ell)$ expresses that the edges $e_1,\ldots,e_\ell$ form a path between the vertices $u$ and $v$. The predicate $\rainbow(e_1,\ldots,e_\ell)$ expresses that the edges $e_1,\ldots,e_\ell$ are each in precisely one color class. \sv{The formal definition of these predicates is available in the appendix.}

In the case of $k$-\rvcprob, the MSO$_1$ formula $\phi_k$ is defined analogously, with the following distinctions: 
\begin{enumerate}
\item instead of edges, we partition the vertices of $G$ into color classes; 
\item the predicate $\path$ speaks of vertices instead of edges and uses the adjacency relation instead of the incidence relation; and
\item the predicate $\rainbow$ tests the coloring of vertices instead of edges.
\end{enumerate}
\vspace{-0.65cm}
\end{proof}

Using a similar approach, we obtain an analogous result for the strong variants of these problems.

\lv{\begin{lemma}}
\sv{\begin{lemma}[$\star$]}
\label{lem:MSOstrong}
For every $k\in \Nat$ there exists a MSO$_1$ formula $\phi_k$ such that for every graph $G$, it holds that $G\models \phi$ iff $G$ is a YES-instance of $k$-\textsc{SRVC}. Similarly, for every $k\in \Nat$ there exists a MSO$_2$ formula $\psi_k$ such that for every graph $G$, it holds that $G\models \psi$ iff $G$ is a YES-instance of $k$-\srcprob.
\end{lemma}

\newcommand{\pfMSOstrong}[0]{
\begin{proof}
In the case of $k$-\srcprob, we wish to partition the edges of the graph $G=(V,E)$ into $k$ color classes $C_1,\ldots,C_k$ such that each pair of vertices is connected by a rainbow shortest path. \lv{We will assume the predicates $\path(u,v,e_1,\ldots,e_\ell)$ and $\rainbow(e_1,\ldots,e_\ell)$ are defined precisely as in Lemma~\ref{lem:MSOstandard}.}

Let us then construct the following MSO$_2$ formula $\psi_k$:
\begin{equation*}
\begin{split}
\psi_k &\coloneqq \exists C_1,\ldots,C_k \subseteq E \Big( \forall e \in E \Big( e \in C_1 \vee \cdots \vee e \in C_k \Big) \Big) \\
	&\wedge \Big( \forall i,j\in [k], i\neq j: (C_i \cap C_j = \emptyset) \Big) \\
	&\wedge \Big( \forall u,v \in V \Big( (u \neq v) \implies \Big( \exists i\in[k] ~
	\exists e_1,\ldots,e_i \in E \big( \path(u,v,e_1,\ldots,e_i) \\
	&\wedge \rainbow(e_1,\ldots,e_i) \\
	&\wedge \forall j \in [i-1] ~ \neg \big( \exists w_1,\ldots,w_j \in E ~ \path(u,v,w_1,\ldots,w_j) \big) \big) \Big) \Big) \Big).
\end{split}
\end{equation*}
To capture the property of being a shortest path, we require there to be a $u-v$ path of length $i$, and no paths of length less than $i$. Furthermore, observe that no path of length greater than $k$ can be rainbow. The construction for $k$-\textsc{SRVC} then uses the same ideas, with the same distinctions as those specified in Lemma~\ref{lem:MSOstandard}. 
\end{proof}
}
\lv{\pfMSOstrong}

\begin{theorem}
\label{thm:twrwfpt}
Let $p\in \Nat$ be fixed. Then the problems \textsc{$k$-RC, $k$-SRC, $k$-RVC, $k$-SRVC} can be solved in time $\bigoh(n)$ on $n$-vertex graphs of treewidth at most $p$. Furthermore, \textsc{$k$-RVC, $k$-SRVC} can be solved in time $\bigoh(n^3)$ on $n$-vertex graphs of clique-width at most $p$.
\end{theorem}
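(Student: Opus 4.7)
The plan is to combine the MSO-expressibility results established in Lemmas~\ref{lem:MSOstandard} and~\ref{lem:MSOstrong} with the algorithmic meta-theorems in Facts~\ref{fact:msotreewidth} and~\ref{fact:msorankwidth}, together with the tree-decomposition routine of Fact~\ref{fact:findtw}. Since $k$ is fixed for each of the four problems, the MSO formulas $\phi_k$ and $\psi_k$ produced by the two lemmas have constant size independent of the input graph, so the meta-theorems apply directly.

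For the treewidth claim, I would first invoke Fact~\ref{fact:findtw} with the fixed parameter $p$. Any graph of treewidth at most $p$ has $|E(G)| \leq p\cdot |V(G)|$, so the combined input size is $\bigoh(n)$ and Fact~\ref{fact:findtw} either returns a decomposition of width at most $p$ in $\bigoh(n)$ time or correctly rejects. In the former case, for $k$-\rcprob{} and $k$-\srcprob{} I would then evaluate the MSO$_2$ sentence $\psi_k$ from Lemmas~\ref{lem:MSOstandard} and~\ref{lem:MSOstrong} using Fact~\ref{fact:msotreewidth}, obtaining a linear-time decision. For $k$-\rvcprob{} and $k$-\srvcprob{}, the MSO$_1$ sentence $\phi_k$ can be regarded as an MSO$_2$ sentence (the adjacency relation $E(x,y)$ is simulated by an existentially quantified edge $e$ with $I(x,e) \wedge I(y,e)$), so Fact~\ref{fact:msotreewidth} again yields an $\bigoh(n)$ algorithm.

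For the clique-width claim, which concerns only the vertex variants, I would apply Fact~\ref{fact:msorankwidth} directly to the MSO$_1$ sentence $\phi_k$ supplied by Lemmas~\ref{lem:MSOstandard} and~\ref{lem:MSOstrong}. Since $p$ is a fixed constant and $\phi_k$ has fixed size, this gives a running time of $\bigoh(n^3)$; the cubic factor is inherited from the currently best bound on computing or approximating a suitable clique- or rank-decomposition, as highlighted in the discussion accompanying Fact~\ref{fact:msorankwidth}. The edge variants $k$-\rcprob{} and $k$-\srcprob{} are deliberately excluded from this second part of the statement because MSO$_2$ model-checking is not tractable on graphs of bounded clique-width in general, and our formulas for the edge variants quantify over sets of edges.

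There is no real obstacle in the proof: the technical content is already packaged in the MSO-expressibility lemmas, and the remainder is a routine invocation of the meta-theorems. The only minor points worth verifying are that the size of each formula depends solely on $k$ — which it does, since both the quantification over color classes and the disjunction over the at most $k$ possible rainbow path lengths are bounded by $k$ — and that the hidden constants in Facts~\ref{fact:findtw}, \ref{fact:msotreewidth} and \ref{fact:msorankwidth} depend only on the fixed parameters $p$ and $k$, so they do not reintroduce a dependence on $n$.
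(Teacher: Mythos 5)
Your proposal is correct and follows exactly the same route as the paper, whose proof is a one-line invocation of Lemmas~\ref{lem:MSOstandard} and~\ref{lem:MSOstrong} together with Facts~\ref{fact:msotreewidth} and~\ref{fact:msorankwidth}. The extra details you supply (the MSO$_1$-to-MSO$_2$ translation, the constant formula size, and why the edge variants are excluded from the clique-width claim) are all sound and merely make explicit what the paper leaves implicit.
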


\begin{proof}
The proof follows from Lemma~\ref{lem:MSOstandard} and Lemma~\ref{lem:MSOstrong} in conjunction with Fact~\ref{fact:msotreewidth} and Fact~\ref{fact:msorankwidth}.
\end{proof}

In the language of parameterized complexity~\cite{Downey2013\lv{,Niedermeier2006}}, Theorem~\ref{thm:twrwfpt} implies that these problems are fixed-parameter tractable (FPT)  parameterized by treewidth, and their vertex variants are FPT parameterized by clique-width.

\section{The Complexity of Saving Colors}
\label{sec:saving}

This section focuses on the saving versions of the rainbow coloring problems introduced in Subsection~\ref{sub:prob}, and specifically gives linear-time algorithms for $k$-\savingrc{} and $k$-\savingrvc. Our results make use of the following facts.
\begin{ourfact}[\hspace{1sp}\cite{Gottlob2007}]\label{fact:mso1connects}
There is a \emph{MSO}$_1$ predicate $\vertexconnects$ such that on a graph $G=(V,E)$ $\vertexconnects(S,u,v)$ is true iff $S \subseteq V$ is a set of vertices of $G$ such that there is a path from $u$ to $v$ that lies entirely in $S$.
\end{ourfact}
The above is easily modified to give us the following.
\begin{ourfact}\label{fact:mso2connects}
There is a \emph{MSO}$_2$ predicate $\edgeconnects$ such that on a graph $G=(V,E)$ $\edgeconnects(X,u,v)$ is true iff $X \subseteq E$ is a set of edges of $G$ such that there is path from $u$ to $v$ that lies entirely in $X$.
\end{ourfact}
\begin{theorem} \label{thm:savingrc}
For each $k\in \Nat$, the problem $k$-$\savingrc$ can be solved in time $\bigoh(n)$ on $n$-vertex graphs.
\end{theorem}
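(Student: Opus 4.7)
The plan is to employ a win-win strategy based on the cyclomatic number $|E|-|V|+1$ of the input graph $G$. The starting point is the standard observation that $\rc(G) \leq |V|-1$ always holds for connected $G$, since coloring any spanning tree of $G$ with $|V|-1$ distinct colors (and the remaining edges arbitrarily) produces a rainbow coloring via the unique tree paths. Thus, if $|E|-|V|+1 \geq k$, we have $\rc(G) \leq |V|-1 \leq |E|-k$ and the algorithm immediately returns YES, which can be checked in $\bigoh(n)$ time. Otherwise $|E| < |V|+k-1$, and the feedback edge number of $G$ is at most $k-1$; choosing one endpoint from each feedback edge yields a feedback vertex set of size at most $k-1$, and since removing this set leaves a forest we conclude $\tw(G) \leq k$. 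In this regime, Fact~\ref{fact:findtw} produces a tree decomposition of width at most $k$ in $\bigoh(n)$ time.

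Having reduced to bounded-treewidth instances, I would encode the property $\rc(G) \leq |E|-k$ by a fixed \emph{MSO}$_2$ sentence $\psi_k$ and invoke Fact~\ref{fact:msotreewidth}. The design of $\psi_k$ hinges on two observations. First, refining a rainbow edge coloring (by splitting color classes) never destroys the rainbow property, so $\rc(G) \leq |E|-k$ holds iff there exists a rainbow coloring with \emph{exactly} $|E|-k$ classes. Second, such a coloring is determined by choosing $k$ pairwise distinct ``saved'' edges $e_1, \dots, e_k \in E$ together with ``target'' edges $f_1, \dots, f_k \in E \setminus \{e_1, \dots, e_k\}$: each $e_i$ inherits the color of $f_i$, every other edge receives a unique color, and the color classes (generated by the equivalence $e_i \sim f_i$) number exactly $|E|-k$ independently of any coincidences among the $f_i$. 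Accordingly, $\psi_k$ existentially quantifies over the $2k$ edges $e_i, f_i$, asserts the distinctness and disjointness constraints, and then states that for every pair of vertices $u \neq v$ there is a set $Z \subseteq E$ with $\edgeconnects(Z,u,v)$ (Fact~\ref{fact:mso2connects}) such that $Z$ avoids any two edges sharing a color. The last condition is a finite MSO$_2$-expressible conjunction of $\neg(e_i \in Z \land f_i \in Z)$ for each $i \in [k]$ together with $f_i = f_j \implies \neg(e_i \in Z \land e_j \in Z)$ for each pair $i \neq j$.

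The main obstacle will be verifying that the encoding is exhaustive, i.e., that every rainbow coloring witnessing $\rc(G) \leq |E|-k$ corresponds to some tuple $(e_i, f_i)$ satisfying $\psi_k$. This reduces to refining the coloring to one with exactly $|E|-k$ classes and then, for each non-singleton class $C$, choosing a representative $r_C$ to play the role of the shared $f$-value while every other member of $C$ becomes an $e_i$; disjointness of saved edges from targets follows because each representative is chosen inside its own class. The converse direction is routine: given any satisfying tuple, the set $Z$ quantified for each pair $(u,v)$ contains a $u$-$v$ path that by construction uses at most one edge from any color class and is therefore rainbow in the induced coloring. Once correctness of $\psi_k$ is established, the $\bigoh(n)$ running time follows immediately from Fact~\ref{fact:msotreewidth} applied to the tree decomposition produced above.
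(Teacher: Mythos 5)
Your proposal is correct and follows essentially the same route as the paper: use the spanning-tree bound $\rc(G)\le n-1$ to answer YES when $m$ is large relative to $n+k$, observe that otherwise the feedback edge number (hence treewidth) is bounded by a function of $k$, and then express ``$\rc(G)\le m-k$'' as a fixed MSO$_2$ sentence built on the $\edgeconnects$ predicate and apply Courcelle's theorem. The only difference is cosmetic: you witness the $k$ saved colors by $2k$ individually quantified edges $e_i,f_i$ encoding a coloring with exactly $m-k$ classes, whereas the paper quantifies $k$ disjoint edge sets $R_1,\dots,R_k$ whose union has at least $2k$ edges; both encodings are fixed-size MSO$_2$ witnesses of the same fact.
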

\begin{proof}
Observe that by coloring each edge of a spanning tree of $G$ with a distinct color we have that $\rc(G) \leq n-1$. Thus, if $m \geq n + k$, we have a YES-instance of $k$-$\savingrc$. Otherwise, suppose $m < n + k$. Then $G$ has a feedback edge set of size at most $k$, and hence $G$ has treewidth at most $k$. We construct a MSO$_2$ formula $\psi_k$ such that it holds that $G \models \psi_k$ is true iff $G$ is a YES-instance of $k$-$\savingrc$. Using Fact~\ref{fact:mso2connects}, we construct $\psi_k$ as follows:
\begin{equation*}
\begin{split}
\psi_k &\coloneqq \exists R_1,\ldots,R_k \subseteq E \Big( \forall i,j \in[k], i\neq j: (R_i \cap R_j = \emptyset) \Big) \\
	&\wedge \Big( \forall i \in[k]: \big( \exists e \in E (  e \in R_i ) \big) \Big) 
	\wedge |R_1\cup R_2 \cup \dots \cup R_k|\ge 2k \\
	&\wedge \Big( \forall u,v \in V \Big( (u \neq v) \implies \Big( \exists X \subseteq E \Big( \edgeconnects(X,u,v) \\
	&\wedge \forall e_1,e_2 \in X \Big( \forall i\in[k]: (e_1 \in R_i \wedge e_2 \in R_i) \implies (e_1=e_2)  \Big) \Big) \Big) \Big) \Big).
\end{split}
\end{equation*}


In the above, the expression $|A|\ge 2k$ is shorthand for the existence of $2k$ pairwise-distinct edges in $A$, which can be expressed by a simple but lengthy MSO$_2$ expression.
The formula $\psi_k$ expresses that there exist $k$ disjoint sets $R_1,\ldots,R_k$ of edges (each corresponding to a different color set with at least $1$ edge) such that their union contains at least $2k$ edges, with the following property: there is a path using at most one edge from each set $R_1,\ldots,R_k$ between every pair of vertices. Formally, this property is stated as the existence of an edge-set $X$ for each pair of vertices $u,v$ such that the graph $(V,X)$ contains an $u-v$ path that cannot repeat edges from any $R_i$.
The proof then follows by Fact~\ref{fact:msotreewidth}.
\end{proof}
To prove a similar result for $k$-$\savingrvc$, we will use the following result.
\begin{ourfact}[\hspace{1sp}\cite{Bodlaender1993}]\label{fact:twspanning}
If the treewidth of a connected graph $G$ is at least $2k^3$, then $G$ has a spanning tree with at least $k$ vertices with degree 1.
\end{ourfact}

\begin{theorem}
For each $k\in \Nat$, the problem $k$-$\savingrvc$ can be solved in time $\bigoh(n)$ on $n$-vertex graphs.
\end{theorem}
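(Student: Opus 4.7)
The plan is to closely mirror the strategy of Theorem~\ref{thm:savingrc}, replacing the trivial spanning-tree upper bound for $\rc$ by its vertex analogue and the predicate $\edgeconnects$ by $\vertexconnects$ (Fact~\ref{fact:mso1connects}) in the resulting MSO$_1$ formula. The key structural observation I will use is that if $T$ is a spanning tree of $G$ with at least $\ell$ leaves, then coloring all non-leaf vertices of $T$ with pairwise distinct colors (and each leaf with any already-used color) gives a valid rainbow vertex coloring of $G$, because leaves occur only as endpoints of $T$-paths while internal vertices are always non-leaves and hence distinctly colored. Thus $\rvc(G) \leq n - \ell$, and a spanning tree with at least $k$ leaves immediately certifies $\rvc(G) \leq n - k$.

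Based on this, the algorithm branches on $\tw(G)$. Since every graph of treewidth at most $2k^3$ has $\bigoh(k^3 n)$ edges, I would first reject inputs with $|E| > 2k^3 n$: these have $\tw(G) > 2k^3$ and thus, by Fact~\ref{fact:twspanning} applied with parameter $k$, admit a spanning tree with at least $k$ leaves, so the answer is YES. This filtering ensures that the remaining computation is performed on a graph with $\bigoh(n)$ edges. On such a sparse instance, Fact~\ref{fact:findtw} with parameter $2k^3$ runs in $\bigoh(n)$ time; if it certifies $\tw(G) \geq 2k^3$, output YES, while otherwise a tree decomposition of width bounded by a constant is obtained and we evaluate in $\bigoh(n)$ time (via Fact~\ref{fact:msotreewidth}) an MSO$_1$ formula $\psi_k$ that is the vertex analogue of the formula constructed in Theorem~\ref{thm:savingrc}. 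Specifically, $\psi_k$ asserts the existence of pairwise disjoint nonempty sets $R_1,\ldots,R_k \subseteq V$ with $|R_1 \cup \cdots \cup R_k| \geq 2k$ such that for every pair of distinct $u,v \in V$ there exists $S \subseteq V$ with $\vertexconnects(S,u,v)$ and such that each $R_i$ contains at most one element of $S \setminus \{u,v\}$.

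The forward direction of the equivalence between $G \models \psi_k$ and $\rvc(G) \leq n-k$ is immediate: coloring each $R_i$ by color $i$ and assigning a unique fresh color to each vertex outside $R_1 \cup \cdots \cup R_k$ uses at most $k + (n-2k) = n-k$ colors and is rainbow vertex by the $S$-condition. The main obstacle will be the converse, for which my plan is as follows: start from any valid coloring with at most $n-k$ colors, and use the fact that \emph{refining} a color class (splitting it into two sub-classes receiving distinct colors) never destroys rainbow vertex connectedness to reduce to a coloring that uses \emph{exactly} $n-k$ colors, whose total excess $\sum_i(|C_i|-1)$ therefore equals precisely $k$; then take the $R_i$ to be the $k$ largest color classes. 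A short case distinction on the number $d$ of non-singleton classes shows $|R_1 \cup \cdots \cup R_k| \geq 2k$ whenever $n \geq 2k$: if $d \leq k$, the top $k$ classes consist of all $d$ non-singletons (of total size $k+d$) together with $k-d$ singletons, summing to $2k$; and if $d > k$, each of the top $k$ classes already has size at least $2$. The remaining corner case $n < 2k$ is resolved by brute force in $\bigoh(1)$ time since the graph then has constant size, which combines with the above to give the claimed $\bigoh(n)$ total running time.
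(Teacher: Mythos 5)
Your proposal is correct and follows essentially the same route as the paper: a win--win on treewidth, using Fact~\ref{fact:twspanning} to answer YES when the treewidth is at least $2k^3$ (via a spanning tree with many leaves), and otherwise MSO$_1$ model checking of the vertex analogue of the $k$-$\savingrc$ formula built from $\vertexconnects$ on a bounded-treewidth graph. You additionally spell out details the paper leaves implicit --- the explicit formula, the converse direction via refining to exactly $n-k$ color classes and taking the $k$ largest, the $n<2k$ corner case, and the edge-count filtering that makes the bound genuinely $\bigoh(n)$ --- but this elaborates rather than departs from the paper's argument.
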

\begin{proof}
Using Fact~\ref{fact:findtw}, we will test if the treewidth of $G$ is at least $2k^3$. If it is, then by Fact~\ref{fact:twspanning} the graph $G$ has a spanning tree with at least $k$ vertices of degree 1. Each of these $k$ vertices can receive the same color, and we conclude we have a YES-instance. Otherwise, suppose the treewidth of $G$ is less than $2k^3$, and we construct a MSO$_1$ formula $\phi_k$ such that it holds that $G \models \phi_k$ is true iff $G$ is a YES-instance of $k$-$\savingrvc$. The construction is analogous to Theorem~\ref{thm:savingrc}, but instead of $\edgeconnects$ we use $\vertexconnects$ from Fact~\ref{fact:mso1connects}. The proof then follows by Fact~\ref{fact:msotreewidth}.
\end{proof}


\section{Rainbow Coloring Graphs with Small Vertex Covers}
\label{sec:vc}
In this section we turn our attention to the more general problem of determining whether the rainbow connection number is below a number specified in the input. Specifically, we show that $\rcprob$, $\rvcprob$, and $\srvcprob$ admit linear time algorithms on graphs of bounded vertex cover number. In particular, this implies that \textsc{RC, RVC, SRVC} are FPT parameterized by $\vcn(G)$.

\lv{\begin{lemma}}
\sv{\begin{lemma}[$\star$]}
\label{lem:vcvertexcolor}
Let $G=(V,E)$ be a connected graph and $p=\vcn(G)$. Then $\rvc(G)\leq 2p$ and $\srvc(G)\leq p^2$.
\end{lemma}

\newcommand{\pflemvccolor}[0]{
\begin{proof}
Let us fix a vertex cover $X$ of cardinality $p$. For the first claim, let $S$ be a spanning tree of $G$ of minimum diameter, and observe that the diameter of $S$ is at most $2p$. Let $Z$ be the non-leaf vertices of $S$. Furthermore, it must hold that $Z\cap (V\setminus X)\leq p$, and hence $|Z|\leq 2p$. Let $\alpha$ be a vertex coloring which assigns a unique color from $[|Z|]$ to each vertex in $Z$, and then assigns the color $1$ to each vertex in $V\setminus Z$. Then $\alpha$ is a rainbow vertex coloring: indeed, for any choice of $a$ and $b$, there exists an $a-b$ path whose internal vertices are a subset of $Z$.

For the second claim, consider the set $Q$ constructed as follows: for each distinct $a,b\in X$, if there exists a vertex $v$ in $V\setminus X$ adjacent to both $a$ and $b$, we choose an arbitrary such $v$ and add it into $Q$. Let $Z=Q\cup X$, and observe that $|Z|\leq p^2$. Once again, let $\alpha$ be a vertex coloring which assigns a unique color from $[|Z|]$ to each vertex in $Z$, and then assigns the color $1$ to each vertex in $V\setminus Z$. We claim that $\alpha$ is a strong rainbow vertex coloring. Indeed, consider any $a,b\in V$ and let $P$ be an arbitrary shortest $a-b$ path. Then for every internal vertex $v_i$ of $P$ such that $v_i\not \in X$, it must hold that $v_{i-1}\in X$ and $v_{i+1}\in X$. Consider the path $P'$ obtained from $P$ by replacing each internal vertex $v_i\not \in X$ by $v'_i$, where $v'_i$ is an element of $Q$ which is adjacent to $v_{i-1}$ and $v_{i+1}$. Since $P'$ has the same length as $P$ and $P'$ is rainbow colored by $\alpha$, the claim follows.
\end{proof}
}
\lv{\pflemvccolor}

The following lemma will be useful in the proof of Lemma~\ref{lem:vccolors}, a key component of our approach for dealing with $\rcprob$ on the considered graph classes. An \emph{edge separator} is an edge $e$ such that deleting $e$ separates the connected component containing $e$ into two connected components.

\lv{\begin{lemma}}
\sv{\begin{lemma}[$\star$]}
\label{lem:leaves}
Let $G=(V,E)$ be a graph and $X$ be a minimum vertex cover of $G$. Then there exist at most $2|X|-2$ edge separators which are not incident to a pendant outside of $X$.
\end{lemma}

\newcommand{\pflemleaves}[0]{
\begin{proof}
We prove the lemma by induction on $p=|X|$. If $p=1$, then the graph is a star and the lemma holds (in a star, every edge separator is incident to a pendant). 

So, assume the lemma holds for $p-1$ and assume $G$ has a vertex cover $X$ of size $p$. Let $S$ be the set of all edge separators in $G$ which are not incident to a pendant outside of $X$. If $S$ contains an edge separator $e$ whose both endpoints lie in $X$, then $e$ separates $X$ into two non-empty subsets $X_1$ and $X_2$ and every other separator has both endpoints either in $X_1$ or in $X_2$. Let $G_1$ and $G_2$ be the connected components of $G-e$ containing $X_1$ and $X_2$, respectively. Observe that $X_i$ is a vertex cover of $G_i$ for $i\in [2]$. Since $|X_1|<p$ and $|X_2|<p$, by our inductive assumption it follows that $G_1$ contains at most $2|X_1|-2$ edge separators which are not incident to a pendant outside of $X$, and similarly $G_2$ contains at most $2|X_2|-2$ edge separators which are not incident to a pendant outside of $X$. Since each edge separator in $G$ is either $e$ or an edge separator in $G_1$ or $G_2$, it follows that $|S|=1+2|X_1|-2+2|X_2|-2=1+2p-4< 2p-2$, and hence in this case the lemma holds.

On the other hand, assume $S$ contains an edge separator $e=ax$ where $x\in X, a\not \in X$. Since the connected component of $G-e$ containing $a$ is not a pendant, it follows that $a$ has a neighbor in $G$ which is different from $x$, and hence this connected component (say $G_1$) contains at least one vertex from $X$. Let $X_1=X\cap V(G_1)$, $X_2=X\setminus X_1$ and $G_2$ be the connected component of $G-e$ containing $X_2$. This implies that in this case $e$ also separates $X$ into two non-empty subsets $X_1$ and $X_2$. Furthermore, if there exists another $e'\in S$ which separates $X$ into the same sets $X_1$ and $X_2$ as $e$, then $e'$ must also be incident to $a$ and in particular this other edge $e'$ is unique; every other separator in $S$ has both endpoints either in $X_1$ or in $X_2$. Since $|X_1|<p$ and $|X_2|<p$, by our inductive assumption it follows that $G_1$ contains at most $2|X_1|-2$ edge separators which are not incident to a pendant outside of $X$, and similarly $G_2$ contains at most $2|X_2|-2$ edge separators which are not incident to a pendant outside of $X$. Since each edge separator in $G$ is either $e$ or $e'$ or an edge separator in $G_1$ or $G_2$, it follows that $|S|\leq 2+2|X_1|-2+2|X_2|-2=2+2p-4\leq 2p-2$, and hence in this case the lemma also holds.
\end{proof}
}
\lv{\pflemleaves}

For ease of presentation, we define the function $\beta$ as $\beta(p)=2p-2+p\cdot (p^2+2p\cdot 2^p)$. The next Lemma~\ref{lem:vccolors} will represent one part of our win-win strategy, as it allows us to precisely compute $\rc(G)$ when the number of edge separators is sufficiently large. We remark that an analogous claim does not hold for $\src(G)$ (regardless of the choice of $\beta$).

\lv{\begin{lemma}}
\sv{\begin{lemma}[$\star$]}
\label{lem:vccolors}
Let $G=(V,E)$ be a connected graph and $p=\vcn(G)$. Let $z$ be the number of edge separators in $G$. If $z\geq \beta(p)$, then $\rc(G)=z$.
\end{lemma}

\newcommand{\pfvccolors}[0]{
\begin{proof}
Let us fix a vertex cover $X$ of cardinality $p$. 
It is known that the number of edge separators is a lower bound for $\rc(G)$~\cite{Chartrand2010}, i.e., $\rc(G)\geq z$. We will show that $z$ is also an upper bound for $\rc(G)$.

Consider the edge $z$-coloring $\alpha$ constructed as follows. Since $X$ is a vertex cover and, by Lemma~\ref{lem:leaves} in conjunction with our assumption on $z$, there are at least $p\cdot (p^2+(4+p)\cdot 2^p)$ leaves in $G$, it follows that there must exist some $x\in X$ adjacent to at least $z'=p^2+2p\cdot 2^p$ pendants. Let $\{e_1,\dots,e_{z'}\}$ be the edges incident to both $x$ and a pendant vertex, and let $\{e_{z'+1},\dots,e_z\}$ be all the remaining edge separators; then for each edge separator we set $\alpha(e_i)=i$.

Let $f_1,\ldots,f_q$ be the edges of $G[X]$ which are not edge separators; for each such edge we set $\alpha(f_i)=z'-i$. Observe that for each such $f_i$ we have $\alpha(f_i)>2p\cdot 2^p$.

Consider the set $\tau=\SB T_i \SM T_i\text{ is a type in } G \text{ and }|N(T_i)|>1 \SE$. Let $Q_i=\{2pi+1,\dots,2pi+2p\}$. For each $T_i\in \tau$, we let $G_i$ be the subgraph of $G$ on $T_i\cup N(T_i)$ which contains exactly the edges incident to $T_i$. Then $G_i$ is bipartite, and furthermore can be rainbow colored using at most $2p$ colors as follows: we pick an arbitrary $y\in T_i$ and uniquely color all edges in $G_i$ incident to $y$ using colors $c_1,\dots,c_p$, and for every other vertex in $T_i$ we color all edges in $G_i$ incident to $y'$ using colors $c_{1+p},\dots,c_{2p}$. For each type $T_i\in \tau$, we let $\alpha$ color the edges incident to $T_i$ in this manner using the colors from $Q_i$.

We will proceed by arguing that $\alpha$ is a rainbow edge $z$-coloring of $G$, but before that we make three key observations. First, there are only two cases when $\alpha$ can use the same color for two distinct edges $e,f$: either one of $e,f$ is an edge between $x$ and a pendant, or both $e,f$ occur in some $G_i$. Second, if $|T_i|>1$, then for every $u\in N(T_i)$ and every $v\in V(G_i)$ and every color $c$, there exists a rainbow $u-v$ path in $G_i$ under $\alpha$ which does not use $c$. Third, each $G_i$ is rainbow colored by $\alpha$.

We now make the following case distinction.
\begin{enumerate}
\item Let $a,b\in V$ be such that neither is a pendant adjacent to $x$. Consider an arbitrary $a-b$ path $P$ such that the number of pairs of edges in $P$ assigned the same color by $\alpha$ is minimized. If $P$ contains two edges $e,f$ such that $\alpha(e)=\alpha(f)$, then both $e$ and $f$ must occur in some $G_i$. Let $t$ and $u$ be the first and last vertex in $V(G_i)$ on $P$, respectively. Since $G_i$ is rainbow colored by $\alpha$, there exists a $t-u$ rainbow path $P^*$ in $G_i$. Let $P'$ be obtained from $P$ by replacing the path segment between $t$ and $u$ by $P^*$; by the key observation made above, it follows that $P'$ has a strictly lower number of pairs of edges in $P$ which are assigned the same color by $\alpha$, hence contradicting our choice of $P$.
\item Let $a$ be a pendant adjacent to $x$, and $b\in V$. Let $c=\alpha(xa)$. Consider an arbitrary $a-b$ path $P$ such that the number of pairs of edges in $P$ assigned the same color by $\alpha$ is minimized. If $P$ contains two edges $e,f$ such that $\alpha(e)=\alpha(f)\neq c$, then both $e$ and $f$ must occur in some $G_i$ s.t. $|T_i|>1$. Let $t$ and $u$ be the first and last vertex in $V(G_i)$ on $P$, respectively. Since $t\in X\cap N(T_i)$, by our observations above it follows that there exists a rainbow $t-u$ path $P^*$ in $G_i$ which avoids $c$. Hence the path obtained by replacing the path segment between $t$ and $u$ by $P^*$ once again contradicts our choice of $P$. On the other hand, if $P$ contains an edge $e$ such that $\alpha(e)=c$, then either $e$ is an edge in $G[X]$ or $e$ is incident to some $T_i$. In the latter case, the same argument can be used to contradict our choice of $P$. In the former case it follows by construction of $\alpha$ that $c$ only occurs on the edge $(x,a)$ and on $e$, and furthermore $e$ is contained in some $2$-edge-connected component $D$ of $G$. Let $d,w$ be the first and last vertex, respectively, in $D$ which occurs in $P$, and let $P'$ be the path obtained from $P$ by replacing the path segment between $d$ and $w$ by an arbitrary rainbow path segment in $D$ which does not contain $e$. It is readily verified that the colors which occur in $D$ are only repeated on edges between $x$ and pendants, and in particular such edges cannot occur on $P'$. Hence $P'$ again contradicts our choice of $P$.
\end{enumerate}

To summarize, for any $a,b\in V$ there exists a rainbow $a-b$ path under $\alpha$, and hence $\alpha$ witnesses that $\rc(G)\leq z$. We conclude that $\rc(G)=z$.
\end{proof}
}
\lv{\pfvccolors}
\newcommand{\pfvccolorsshort}{
\begin{proof}[Sketch]
Let us fix a vertex cover $X$ of cardinality $p$. 
It is known that the number of edge separators is a lower bound for $\rc(G)$~\cite{Chartrand2010}, i.e., $\rc(G)\geq z$. We will show that $z$ is also an upper bound for $\rc(G)$.

Consider the edge $z$-coloring $\alpha$ constructed as follows. Since $X$ is a vertex cover and, by Lemma~\ref{lem:leaves} in conjunction with our assumption on $z$, there are at least $p\cdot (p^2+(4+p)\cdot 2^p)$ leaves in $G$, it follows that there must exist some $x\in X$ adjacent to at least $z'=p^2+2p\cdot 2^p$ pendants. Let $\{e_1,\dots,e_{z'}\}$ be the edges incident to both $x$ and a pendant vertex, and let $\{e_{z'+1},\dots,e_z\}$ be all the remaining edge separators; then for each edge separator we set $\alpha(e_i)=i$.

Let $f_1,\ldots,f_q$ be the edges of $G[X]$ which are not edge separators; for each such edge we set $\alpha(f_i)=z'-i$. Observe that for each such $f_i$ we have $\alpha(f_i)>2p\cdot 2^p$.

Consider the set $\tau=\SB T_i \SM T_i\text{ is a type in } G \text{ and }|N(T_i)|>1 \SE$. Let $Q_i=\{2pi+1,\dots,2pi+2p\}$. For each $T_i\in \tau$, we let $G_i$ be the subgraph of $G$ on $T_i\cup N(T_i)$ which contains exactly the edges incident to $T_i$. Then $G_i$ is bipartite, and furthermore can be rainbow colored using at most $2p$ colors as follows: we pick an arbitrary $y\in T_i$ and uniquely color all edges in $G_i$ incident to $y$ using colors $c_1,\dots,c_p$, and for every other vertex in $T_i$ we color all edges in $G_i$ incident to $y'$ using colors $c_{1+p},\dots,c_{2p}$. For each type $T_i\in \tau$, we let $\alpha$ color the edges incident to $T_i$ in this manner using the colors from $Q_i$.

It is possible to verify that $\alpha$ is a rainbow edge $z$-coloring of $G$, and hence $\rc(G)\leq z$. We conclude $\rc(G)=z$.
\end{proof}
}

\lv{\begin{lemma}}
\sv{\begin{lemma}[$\star$]}
\label{lem:vcrcbound}
Let $G=(V,E)$ be a graph with a vertex cover $X\subseteq V$ of cardinality $p$. Let $z$ be the number of edge separators in $G$. If $z< \beta(p)$, then $\rc(G)\leq \beta(p)+p^2+2^p\cdot 2p$.
\end{lemma}

\newcommand{\pflemvcrcbound}[0]{
\begin{proof}
Consider the following edge coloring $\alpha$ which assigns a unique color to each edge in $G[X]$ and to each edge incident to a pendant. For each nonempty type $T_i$, we choose an arbitrary vertex $y_i$ and let $\alpha$ assign a unique color for each of the at most $p$ edges incident to $y_i$. Finally, for each type $T_i$ and each $x\in X$ adjacent to (the vertices of) $T_i$, $\alpha$ uses a single new color for all edges between $x$ and the vertices in $T_i$. It is readily verified that $\alpha$ uses no more than $z+p^2+2^p\cdot 2p$ colors.

We argue that $\alpha$ is rainbow. Let $G_i$ be the subgraph of $G$ on $T_i\cup N(T_i)$ which contains exactly the edges incident to $T_i$, and observe that each $G_i$ is rainbow colored by $\alpha$. Consider any $a,b\in V$ and let $P$ be an $a-b$ path such that the number of pairs of edges in $P$ assigned the same color by $\alpha$ is minimized. By construction of $\alpha$, two edges $e,f$ in $P$ may only have the same color if $e,f$ are both incident to some $T_i$. Let $t$ and $u$ be the first and last vertex in $V(G_i)$ on $P$, respectively. Since $G_i$ is rainbow colored by $\alpha$, there exists a $t-u$ rainbow path $P^*$ in $G_i$ under $\alpha$. Let $P'$ be obtained from $P$ by replacing the path segment between $t$ and $u$ by $P^*$. Then $P'$ has a strictly lower number of pairs of edges in $P$ with the same color, which contradicts our choice of $P$.
\end{proof}}

\lv{\pflemvcrcbound}

\begin{theorem}
\label{thm:vcnfpt}
Let $p\in \Nat$ be fixed. Then the problems \textsc{RC, RVC, SRVC} can be solved in time $\bigoh(n)$ on $n$-vertex graphs of vertex cover number at most $p$.
\end{theorem}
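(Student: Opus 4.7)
The plan is to employ a win-win strategy that either resolves the instance directly using the structural bounds from Lemmas~\ref{lem:vcvertexcolor}, \ref{lem:vccolors}, and \ref{lem:vcrcbound}, or reduces to the fixed-$k$ problem handled by Theorem~\ref{thm:twrwfpt}. As a preliminary step, observe that $\tw(G)\leq \vcn(G)\leq p$, and for fixed $p$ a minimum vertex cover $X$ together with a tree decomposition of width at most $p$ can be computed in linear time (via the standard bounded search-tree algorithm for vertex cover and Fact~\ref{fact:findtw}). In particular, for any constant $k$ that is a function of $p$ only, Theorem~\ref{thm:twrwfpt} solves $k$-\textsc{RC}, $k$-\textsc{RVC}, and $k$-\textsc{SRVC} in $\bigoh(n)$ time.

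For \textsc{RVC} and \textsc{SRVC}, I would dispose of the input $k$ as follows. Lemma~\ref{lem:vcvertexcolor} gives $\rvc(G)\leq 2p$ and $\srvc(G)\leq p^2$. Hence if the input $k\geq 2p$ (respectively $k\geq p^2$), we answer YES immediately. Otherwise $k$ is bounded by a function of $p$ alone, and the instance is equivalent to the corresponding $k$-colored variant, which we solve using Theorem~\ref{thm:twrwfpt}.

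For \textsc{RC}, I would first compute the number $z$ of edge separators of $G$ in linear time (this is a straightforward bridge-finding routine). We then split into two cases. If $z\geq \beta(p)$, then by Lemma~\ref{lem:vccolors} we have $\rc(G)=z$ exactly, so the answer is obtained by comparing $z$ with the input $k$. Otherwise $z<\beta(p)$, and Lemma~\ref{lem:vcrcbound} yields $\rc(G)\leq \beta(p)+p^2+2^p\cdot 2p$, a quantity $B(p)$ depending only on $p$. Consequently, if $k\geq B(p)$ we answer YES; otherwise $k<B(p)$ is a constant and Theorem~\ref{thm:twrwfpt} again solves the instance in linear time.

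The main obstacle is simply assembling the above pieces so that the entire procedure is genuinely linear in $n$: computing the vertex cover, the tree decomposition, and the count of edge separators are all linear for fixed $p$, and the MSO model-checking step is applied only for a constant number of colors $k$ bounded by a function of $p$, so Fact~\ref{fact:msotreewidth} (invoked through Theorem~\ref{thm:twrwfpt}) yields the desired $\bigoh(n)$ running time in every branch of the case analysis.
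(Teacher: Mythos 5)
Your proposal is correct and follows essentially the same win-win strategy as the paper's proof: dispose of large $k$ via Lemma~\ref{lem:vcvertexcolor} for the vertex variants, split on the number of edge separators $z$ versus $\beta(p)$ using Lemmas~\ref{lem:vccolors} and~\ref{lem:vcrcbound} for \textsc{RC}, and fall back on Theorem~\ref{thm:twrwfpt} whenever $k$ is bounded by a function of $p$. The only difference is that you spell out the linear-time computation of the vertex cover, tree decomposition, and bridge count slightly more explicitly than the paper does.
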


\begin{proof}
For \textsc{RVC} and \textsc{SRVC}, we first observe that if $k$ (the queried upper bound on the number of colors) is greater than $2p$ and $p^2$, respectively, then the algorithm can immediately output YES by Lemma~\ref{lem:vcvertexcolor}. Otherwise we use Theorem~\ref{thm:twrwfpt} and the fact that the vertex cover number is an upper bound on the treewidth to compute a solution in $\bigoh(n)$ time.

For \textsc{RC}, it is well known that the total number of edge separators in $G$, say $z$, can be computed in linear time on graphs of bounded treewidth. If $z\geq \beta(p)$, then by Lemma~\ref{lem:vccolors} we can correctly output YES when $z\leq k$ and NO when $z> k$. On the other hand, if $z<\beta(p)$, then by Lemma~\ref{lem:vcrcbound} the value $\rc(G)$ is upper-bounded by a function of $p$. We compare $k$ and this upper bound on $\rc(G)$; if $k$ exceeds the upper bound on $\rc(G)$, then we output YES, and otherwise we can use Theorem~\ref{thm:twrwfpt} along with the fact that the vertex cover number is an upper bound on the treewidth to compute a solution in $\bigoh(n)$ time.
\end{proof}

\section{Concluding Notes}

We presented new positive and negative results for the most prominent variants of  rainbow coloring. We believe that the techniques presented above, and in particular the win-win approaches used in Section~\ref{sec:saving} and Section~\ref{sec:vc}, can be of use also for other challenging connectivity problems. 

It is worth noting that our results in Section~\ref{sec:mso} leave open the question of whether \textsc{Rainbow Coloring} or its variants can be solved in (uniformly) polynomial time on graphs of bounded treewidth. Hardness results for related problems~\cite{Uchizawa13,Lauri15} do not imply that finding an optimal coloring of a bounded-treewidth graph is hard, and it seems that new insights are needed to determine the complexity of these problems on graphs of bounded treewidth.

\bibliographystyle{abbrv}
\bibliography{bibliography}

\newpage

\sv{
\appendix

\section{Proofs for Section~\ref{sec:hardness_srvc}}

\subsection{Proof of Lemma~\ref{lem:ssrvc}}
\pfssrvc

\subsection{Proof of Theorem~\ref{thm_srvc_hardness}}
\pfsrvc

\section{Proofs and Omissions for Section~\ref{sec:hardness_srvc}}

\subsection{Omitted Formulas}

\begin{equation*}
\begin{split}
\path(u,v,e_1,\ldots,e_\ell) &\coloneqq \exists v_1,\ldots,v_{\ell-1} \in V \\
	&\Big( \forall i,j \in [\ell-1], i\neq j: (v_i \neq v_j) \Big) \\
	&\wedge \inc(e_1,u) \wedge \inc(e_\ell,v)\\
&\wedge \Big( \forall i\in[\ell-1]: (\inc(e_i,v_i) \wedge \inc(e_{i+1},v_i)) \Big).
\end{split}
\end{equation*}

\begin{equation*}
\begin{split}
\rainbow(e_1,\ldots,e_\ell) &\coloneqq \forall i\in[\ell] ~ \exists j\in[k]: \Big( e_i \in C_j 
	\wedge ( \forall p\neq i: e_p \notin C_j ) \Big).
\end{split}
\end{equation*}

\subsection{Proof of Lemma~\ref{lem:MSOstrong}}
\pfMSOstrong

\section{Proofs for Section~\ref{sec:vc}}

\subsection{Proof of Lemma~\ref{lem:vcvertexcolor}}
\pflemvccolor

\subsection{Proof of Lemma~\ref{lem:leaves}}
\pflemleaves

\subsection{Proof of Lemma~\ref{lem:vccolors}}
\pfvccolors

\subsection{Proof of Lemma~\ref{lem:vcrcbound}}
\pflemvcrcbound

}

\end{document}